\documentclass[11pt,fleqn]{article}
\usepackage{amsmath,amssymb,amsthm,bbm}
\usepackage{xcolor}
\usepackage{booktabs}
\usepackage{graphicx}
\usepackage{enumitem}
\usepackage[left=2cm,right=2cm,top=2cm,bottom=2cm]{geometry}
\usepackage[pdfstartview=FitH,bookmarksopen=false,
colorlinks=true,linkcolor=blue,citecolor=blue]{hyperref}

\newtheorem{theorem}{Theorem}[section]
\newtheorem{proposition}[theorem]{Proposition}

\theoremstyle{definition}

\newtheorem{remark}[theorem]{Remark}

\newcommand{\R}{\mathbb{R}}

\newcommand{\abs}[1]{\lvert #1 \rvert}

\title{\bf Three characterizations of the weighted center of imputations value\thanks{This work was supported by the National Natural Science Foundation of China (No.~72371151)}}
\author{Erfang Shan$\,^1$\thanks{E-mail addresses: \emph{efshan@shu.edu.cn} (E.~Shan), \emph{lykang@shu.edu.cn} (L.~Kang)},\, Liying Kang$^{\,2}$\\
{\small $^{1}$School of Management, Shanghai University, Shanghai 200444, P.R.~China}\\
{\small $^2$Department of Mathematics,  Shanghai University, Shanghai 200444, P.R.~China}}
\date{}

\begin{document}
\maketitle

\begin{abstract}
The weighted center of imputations (CIS) value allocates the surplus of the grand coalition equally after granting each player a fixed proportion of his individual worth. This paper provides three axiomatic characterizations of this value by generalizing the individual rationality and subgame order preservation axioms. The first characterization employs individual rationality with respect to the weights together with the equal surplus increment property. The second relies on efficiency, additivity, symmetry adjusted by the weights, and a dummifying player property adapted to the weights. The third builds on efficiency and a weak subgame order preservation axiom that incorporates the weights. These results unify and extend recent findings, covering both the equal division and the standard CIS values as special cases.
\bigskip
		
\noindent {\bf Keywords}: Cooperative game, Center of imputations value, Weighted center of
imputations value, Selfishness level, Axiomatization
		
\medskip
		\noindent {\bf AMS (2000) subject classification:} 91A12
		
		\noindent {\bf JEL classification:} C71
\end{abstract}

\section{Introduction}

A transferable utility game (TU-game) is a pair $(N,v)$, where $N$ is a finite set of players and $v:2^N\to\R$ satisfies $v(\emptyset)=0$.
Among the many solution concepts proposed, those that combine respect for individual contributions with egalitarian sharing have attracted considerable interest.
The center of imputations (CIS) value\footnote{This value is nowadays also often called the \emph{equal surplus division value} (e.g., van den Brink, 2007; Casajus and Huettner, 2014).}, introduced by Driessen and Funaki \cite{Dr_1991}, first guarantees each player his individual worth and then divides the remaining surplus equally among all players.
A natural weighted extension, in which each player may claim only a fraction of his individual worth before egalitarian sharing, was introduced by Hou et al. \cite{Hou2024} under the name \emph{generalized CIS value} (or \emph{weighted CIS value}).
Formally, for a given vector of weights $\alpha=(\alpha_i)_{i\in N}$ with $\alpha_i\in[0,1]$, the weighted CIS value is
\begin{equation}\label{eq:weightedcis}
CIS^\alpha_i(N,v)=\alpha_i v(\{i\})+\frac{1}{\abs{N}}\Bigl(v(N)-\sum_{j\in N}\alpha_j v(\{j\})\Bigr).
\end{equation}
The parameter \(\alpha_i\) represents the selfishness level of player $i$, with distinct implications in different game settings. For cost games, a lower \(\alpha_i\) means the player is more selfish and reluctant to take on individual costs. For surplus-sharing (payoff) games, a higher
\(\alpha_i\) indicates greater selfishness, as the player pursues a larger portion of standalone payoffs.
In particular, when all $\alpha_i=0$, we obtain the equal division value $ED_i(N,v)=v(N)/|N|$; when all $\alpha_i=1$, we recover the standard CIS value; and when all $\alpha_i$ equal a common constant $\alpha$, we obtain the $\alpha$-CIS value, which can be expressed as a convex combination of the equal division value and the CIS value (see van den Brink et al. \cite{vanBrinkChunFunaki2016}; Xu et al., \cite{XuDaiShi2015}).
Thus the weighted CIS value interpolates flexibly between purely egalitarian and purely individualistic benchmarks.

In this paper we advance this line of research by generalizing standard axioms, such as the individual rationality, symmetry, the dummifying player property (\cite{CasajusHuettner2014}), and subgame order preservation axiom (\cite{Navarro2025}), to the weighted setting, then develop three distinct axiomatic characterizations for the weighted CIS value.
First, we propose a characterization founded on the generalized individual rationality (namely  $\alpha$-individual rationality) and the equal surplus increment property, which generalizes the characterization for the standard CIS value established by Calleja and Llerena \cite{CallejaLlerena2017}.
Second, we  construct another characterization based on four axioms: efficiency, additivity,
generalized symmetry ($\alpha$-symmetry), and the generalized dummifying player property ($\alpha$-dummifying player property).
This characterization unifies and extends the early works of van den Brink \cite{vanBrink2007} on the equal division value and Casajus and Huettner \cite{CasajusHuettner2014} on the CIS value. Specifically, van den Brink \cite{vanBrink2007} introduced the nullifying player property to characterize the equal division value, while Casajus and Huettner \cite{CasajusHuettner2014} used the dummifying player property to characterize the CIS value. By replacing these with the $\alpha$-dummifying player property, our second result covers both extremes and all intermediate cases in a single framework.

Finally, we derive a characterization via  efficiency and weak subgame order preservation with respect to the weighted vector $\alpha$ (namely $\alpha$-subgame sum-order preservation). This weakened axiom revises Navarro's original subgame order preservation axiom in two respects: it replaces the original worths by $\alpha$-adjusted worths and requires only the sum of payoffs over the intersection to be ordered.
Consequently, this result integrates the order-preserving characterizations of the equal division value and the CIS value proposed by Navarro \cite{Navarro2025} and Lowing, Nakada, and Navarro \cite{Lo2026} into this generalized weighted research framework.

The remainder of the paper is organized as follows. Section~\ref{sec:prel} introduces basic definitions and axioms. Section~\ref{sec3}
presents the three characterizations for the weighted CIS value, and concluding remarks are given in Section \ref{sec:concl}.

\section{Preliminaries}\label{sec:prel}

Let $U$ be an infinite universe of potential players.
For a finite set $N\subset U$, a \emph{TU-game} is a pair $(N,v)$ with $v:2^N\to\R$ and $v(\emptyset)=0$.
Subsets of $N$ are called
{\em coalitions}, and $v(S)$ is the {\em worth} of coalition $S$ generated by cooperation.
The class of all TU-games is denoted by $\mathcal{G}$.

For $S\subseteq N$, $\abs{S}$ denotes its cardinality.
For $x\in\R^N$ and $S\subseteq N$, write $x(S)=\sum_{i\in S}x_i$.
For a nonempty coalition $S\subseteq N$, the \emph{subgame} on $S$ is the game $(S,v|_S)$ where $v|_S(T)=v(T)$ for all $T\subseteq S$.

The set of \emph{feasible payoff vectors} of $(N,v)$ is
\[
X^*(N,v)=\{x\in\R^N\mid x(N)\le v(N)\},
\]
and the set of \emph{pre-imputations} is
\[
X(N,v)=\{x\in\R^N\mid x(N)=v(N)\}.
\]
A \emph{value} is a mapping $\psi$ that assigns to every $(N,v)\in\cal{G}$ a feasible payoff vector $\psi(N,v)\in X^*(N,v)$.

Fix a vector of weights $\alpha=(\alpha_i)_{i\in U}$ with $\alpha_i\in[0,1]$, independent of the game.
For a finite player set $N\subset U$ we write $\alpha$ for its restriction.
The \emph{weighted CIS value} $CIS^\alpha$ is given by \eqref{eq:weightedcis}.
Following Hou et al. \cite{Hou2024} (who studied cost games), $\alpha_i$ measures {\em selfishness}: smaller $\alpha_i$ means more selfish in cost games; in surplus-sharing games, the opposite holds. We therefore treat $\alpha_i$ as a weight without loss of generality.

Define the \emph{$\alpha$-residual game} $v^\alpha$ by
\begin{equation}\label{eq:valpha}
v^\alpha(S)=v(S)-\sum_{j\in S}\alpha_j v(\{j\}),\qquad S\subseteq N.
\end{equation}
Then $CIS^\alpha_i(N,v)=\alpha_i v(\{i\})+\frac{1}{\abs{N}}v^\alpha(N)$.

For a vector $\alpha\in[0,1]^N$, we generalize the classical notions of imputation and essential games as follows. When all $\alpha_i = 1$, these definitions reduce to the standard ones.
The \emph{$\alpha$-imputation set} is defined by
\[
I^\alpha(N,v)=\left\{x\in X(N,v)\;\middle|\;x_i\ge \alpha_i v(\{i\}),\ \forall i\in N\right\}.
\]
A game is called \emph{$\alpha$-essential} if $I^\alpha(N,v)\neq\emptyset$, i.e.,
\[
v(N)\ge \sum_{i\in N}\alpha_i v(\{i\}).
\]


 A player $i\in N$ is  a \emph{nullifying player} in game $(N,v)$ (see \cite{vanBrink2007}) if $v(S)=0$ for all $S\ni i$.
A player $i\in N$ is  a \emph{dummifying player} (see \cite{CasajusHuettner2014}) if $v(S)=\sum_{j\in S}v(\{j\})$ for all $S\ni i$.

We unify and generalize the above two notions as follows.

A player $i\in N$ is called an \emph{$\alpha$-dummifying player} in $(N,v)$ if for all $S\subseteq N$ with $i\in S$,
\[
v(S)=\sum_{j\in S}\alpha_j v(\{j\}), i.e.,  v^\alpha(S)=0.
\]

Obviously, if $\alpha_j=0$ for all $j$, the $\alpha$-dummifying player reduces to a nullifying player; if $\alpha_j=1$ for all $j$, it coincides with a dummifying player.

For any nonempty $T\subseteq N$, the \emph{unanimity game} $u_T$ is defined by $u_T(S)=1$ if $T\subseteq S$, and $0$ otherwise.
Every game $v$ can be uniquely expressed as a linear combination of unanimity games,
\[
v=\sum_{\emptyset\neq T\subseteq N} c_T u_T,
\]
where the coefficients
\[
c_T=\sum_{S\subseteq T}(-1)^{|T|-|S|}\,v(S)
\]
are the \emph{Harsanyi dividends} of coalition $T$ (see \cite{Har_1959}).

We now list the axioms used in our characterizations. Those involving $\alpha$ generalize or unify  classical properties, including individual rationality, symmetry, dummifying (nullifying) player property, and subgame order preservation, to the weighted setting.

{\bf Efficiency}, \textbf{E}. \label{ax:e}
 For all $(N,v)\in\cal{G}$, $\sum_{i\in N}\psi_i(N,v)=v(N)$.

{\bf Additivity}, \textbf{A}.\label{ax:a}  For any two games $(N,v), (N,w)\in \cal{G}$,
$\psi(N,v+w)=\psi(N,v)+\psi(N,w)$.

{\bf Equal Surplus Increment}\footnote{This axiom is called  {\em equal surplus division} in \cite{CallejaLlerena2017}. To distinguish it from the equal surplus division value, we refer to it as the equal surplus increment property.}, \textbf{ESI}. \label{ax:esi}
For any two games $(N,v), (N,v')\in \cal{G}$ such that $v(S)=v'(S)$ for all $S\subset N$ and $v'(N)=v(N)+t$ for some $t\in\R$,
\[
\psi_i(N,v')-\psi_i(N,v)=\frac{t}{\abs{N}}\quad\text{for all } i\in N.
\]

{\bf $\alpha$-Individual Rationality}, $\mathbf{IR^\alpha}$. \label{ax:ira}
For all $\alpha$-essential game $(N,v)$,
\[
\psi_i(N,v)\ge \alpha_i v(\{i\})\quad\text{for all } i\in N.
\]

When all $\alpha_i=1$, this reduces to the classical individual rationality: $\psi_i(N,v)\ge v(\{i\})$.

{\bf $\alpha$-Symmetry}, \textbf{$\alpha$-S}. \label{ax:gs}
If players $i,j\in N$ are symmetric in $(N,v)$ (i.e., $v(S\cup\{i\})=v(S\cup\{j\})$ for all $S\subseteq N\setminus\{i,j\}$), then
\[
\psi_i(N,v)-\alpha_i v(\{i\})=\psi_j(N,v)-\alpha_j v(\{j\}).
\]

When $\alpha_i=\alpha$ for all $i\in N$, \textbf{$\alpha$-S} reduces to the standard symmetry axiom requiring that symmetric players receive equal payoffs after deducting their weighted individual worths.

{\bf $\alpha$-Dummifying Player Property}, \textbf{$\alpha$-DP}. \label{ax:adp}
If player $i$ is an $\alpha$-dummifying player in $(N,v)$, then $\psi_i(N,v)=\alpha_i v(\{i\})$.

This property unifies the nullifying player property (when $\alpha_i=0$, see \cite{vanBrink2007})  and the dummifying player property (when $\alpha_i=1$, see \cite{CasajusHuettner2014}).

For completeness, we recall two axioms introduced by Navarros \cite{Navarro2025} and Lowing, Nakada, and Navarro \cite{Lo2026}, which serve as benchmarks for our weakened and unified axiom.

{\bf Subgame Order Preservation}, \textbf{SOP}. \label{ax:sop}
For every $(N,v)\in\cal{G}$, and for any two coalitions $S,T\subseteq N$ with $\abs{S}=\abs{T}$ and $P=S\cap T\neq\emptyset$, if $v(S)\ge v(T)$ then $\psi_i(S,v|_S)\ge\psi_i(T,v|_T)$ for all $i\in P$.

{\bf Subgame order preservation w.r.t. coalitional surplus}, \textbf{SOPCs}.\label{ax:sopc}
For every $(N,v)\in\cal{G}$, and for any two coalitions $S,T\subseteq N$ with $\abs{S}=\abs{T}$ and $P=S\cap T\neq\emptyset$, if $v(S)-\sum_{j\in S}v(\{j\})\ge v(T)-\sum_{j\in T}v(\{j\})$\footnote{This inequality in \cite{Lo2026}  is written as: $v(S)-\sum_{j\in S\setminus P}v(\{j\})\ge v(T)-\sum_{j\in T\setminus P}v(\{j\})$, and is equivalent to our expression here.}, then $\psi_i(S,v|_S)\ge\psi_i(T,v|_T)$ for all $i\in P$.

We now propose the following axiom, which weakens and unifies {\bf SOP} and {\bf SOPCs} into a single general formulation.

{\bf $\alpha$-Subgame Sum-Order Preservation}, \textbf{$\alpha$-SSOP}. \label{ax:wsp}
For every $(N,v)\in\cal{G}$, and for any two coalitions $S,T\subseteq N$ with $\abs{S}=\abs{T}$, $P=S\cap T\neq\emptyset$, if $v^\alpha(S)\ge v^\alpha(T)$, then
\[
\sum_{i\in P}\psi_i(S,v|_S)\ge\sum_{i\in P}\psi_i(T,v|_T).
\]

The axiom \textbf{$\alpha$-SSOP} generalizes and unifies  \textbf{SOP} and {\bf SOPCs} in two respects: it uses the $\alpha$-adjusted worths $v^\alpha$  instead of
 $v$,  and it only requires the sum of payoffs over the intersection to be ordered, instead of pointwise comparisons.
Moreover, this axiom reduces to weak versions of \textbf{SOP} and  {\bf SOPCs} when all $\alpha_i=0$ and all $\alpha_i=1$, respectively.

\section{Axiomatizations}\label{sec3}
This section provides three axiomatic characterizations of the weighted CIS value.

\subsection{Characterization by  $\alpha$-Individual Rationality  and Equal Surplus Increment
}\label{sec:IRESI}

We begin with the characterization that combines $\alpha$-individual rationality  and the equal surplus increment property.

\begin{theorem}\label{thm:IRESI}
A value $\psi$ on $\mathcal{G}$ satisfies $\alpha$-individual rationality  (\textbf{IR}$^\alpha$) and equal surplus increment (\textbf{ESI}) if and only if $\psi=CIS^\alpha$.
\end{theorem}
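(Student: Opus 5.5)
I split the argument into existence and uniqueness, following the Calleja--Llerena template but replacing individual worths by the $\alpha$-adjusted worths $\alpha_i v(\{i\})$.

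\emph{Existence.} First, $CIS^\alpha$ satisfies \textbf{ESI}. If $v'$ differs from $v$ only at $N$, with $v'(N)=v(N)+t$, then the individual worths agree when $\abs{N}\ge 2$. By \eqref{eq:weightedcis}, each payoff then changes by exactly $t/\abs{N}$. The one-player case $N=\{i\}$ needs care, because there the singleton $\{i\}$ is $N$ itself. A value in $X^*$ satisfying \textbf{ESI} and \textbf{IR}$^\alpha$ must behave sensibly there, and $CIS^\alpha_i(\{i\},v)=v(\{i\})$, so $CIS^\alpha$ shifts by exactly $t$ and the check passes. Second, $CIS^\alpha$ satisfies \textbf{IR}$^\alpha$. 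If $(N,v)$ is $\alpha$-essential, then $v^\alpha(N)\ge 0$, hence $CIS^\alpha_i(N,v)-\alpha_i v(\{i\})=v^\alpha(N)/\abs{N}\ge 0$.

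\emph{Uniqueness.} Let $\psi$ satisfy both axioms and fix $(N,v)$ with $n=\abs{N}$. I introduce the auxiliary game $v_0$ that agrees with $v$ on every proper coalition and sets $v_0(N)=\sum_{j\in N}\alpha_j v(\{j\})$. For $n\ge 2$ the individual worths are unchanged, so $v_0$ is $\alpha$-essential with equality, and $I^\alpha(N,v_0)$ is the single point $(\alpha_i v(\{i\}))_{i\in N}$.

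By \textbf{IR}$^\alpha$, $\psi_i(N,v_0)\ge \alpha_i v(\{i\})$ for every $i$. Summing gives $\sum_i\psi_i(N,v_0)\ge v_0(N)$, while feasibility ($\psi\in X^*$) gives $\sum_i\psi_i(N,v_0)\le v_0(N)$. All the individual inequalities must therefore be equalities, so $\psi_i(N,v_0)=\alpha_i v(\{i\})$.

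The games $v$ and $v_0$ differ only at $N$, by $t=v^\alpha(N)$. Applying \textbf{ESI} gives $\psi_i(N,v)=\alpha_i v(\{i\})+v^\alpha(N)/n=CIS^\alpha_i(N,v)$.

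\emph{Main obstacle: the case $n=1$.} Here changing $v(N)$ also changes $v(\{i\})$, so the construction of $v_0$ breaks. I would handle it with the same squeeze. Take the game $w$ with $w(\{i\})=0$: it is $\alpha$-essential, so \textbf{IR}$^\alpha$ gives $\psi_i\ge 0$ and feasibility gives $\psi_i\le 0$, hence $\psi_i(\{i\},w)=0$. Applying \textbf{ESI} with $t=v(\{i\})$ then yields $\psi_i(\{i\},v)=v(\{i\})=CIS^\alpha_i$. This works because the hypothesis of \textbf{ESI} (agreement on all $S\subset N$) only constrains the empty set when $N=\{i\}$. The step is worth stating explicitly in the proof, since the rest of the argument silently uses $n\ge2$.
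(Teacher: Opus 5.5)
Your proof is correct and follows the paper's argument. The auxiliary game that agrees with $v$ on proper coalitions and sets $v(N)$ to $\sum_{j}\alpha_j v(\{j\})$ is pinned to $(\alpha_i v(\{i\}))_{i\in N}$ by \textbf{IR}$^\alpha$ plus feasibility, and \textbf{ESI} then transfers this to $v$. Your separate treatment of $\abs{N}=1$ is a genuine improvement: for $N=\{i\}$ the paper's auxiliary game changes $v(\{i\})$ itself, so neither its \textbf{IR}$^\alpha$ step nor its existence identity $(v')^\alpha(N)=v^\alpha(N)+t$ holds as written, and your squeeze on the game with $w(\{i\})=0$ correctly closes that case.
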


\begin{proof}
\textbf{Existence.} For an $\alpha$-essential game, since $v^\alpha(N)\ge 0$, $CIS^\alpha_i(N,v)=\alpha_i v(\{i\})+v^\alpha(N)/\abs{N}\ge\alpha_i v(\{i\})$, thus \textbf{IR}$^\alpha$ holds. For \textbf{ESI}, if $v'$ differs from $v$ only on $N$ by $t$, then $(v')^\alpha(N)=v^\alpha(N)+t$, giving $CIS^\alpha_i(N,v')=CIS^\alpha_i(N,v)+t/\abs{N}$.

\medskip
\textbf{Uniqueness.} Let $\psi$ satisfy \textbf{IR}$^\alpha$ and \textbf{ESI}. For an arbitrary game $(N,v)$, construct the auxiliary game $(N,v_\alpha^I)$ by
\[
v_\alpha^I(S)=v(S)\quad\text{for all } S\subset N,\qquad
v_\alpha^I(N)=\sum_{i\in N}\alpha_i v(\{i\}).
\]
By construction $v_\alpha^I(N)=\sum_{i\in N}\alpha_i v(\{i\})$, so $(N,v_\alpha^I)$ is $\alpha$-essential.
Since $\psi$ is feasible,
\[\sum_{i\in N}\psi_i(N,v_\alpha^I)\le v_\alpha^I(N)=\sum_i\alpha_i v(\{i\}).\]
On the other hand, \textbf{IR}$^\alpha$ yields $\psi_i(N,v_\alpha^I)\ge \alpha_i v(\{i\})$ for each $i$, hence we  have $\psi_i(N,v_\alpha^I)=\alpha_i v(\{i\})$ for all $i\in N$.

Now set $d=v(N)-\sum_{j\in N}\alpha_j v(\{j\})$. The games $(N,v_\alpha^I)$ and $(N,v)$ coincide on every proper coalition $S\subset N$, and $v(N)=v_\alpha^I(N)+d$. Applying \textbf{ESI},
\[
\psi_i(N,v)=\psi_i(N,v_\alpha^I)+\frac{d}{\abs{N}}=\alpha_i v(\{i\})+\frac{1}{\abs{N}}\Bigl(v(N)-\sum_{j\in N}\alpha_j v(\{j\})\Bigr)=CIS^\alpha_i(N,v).
\]
Thus $\psi=CIS^\alpha$.
\end{proof}

\subsection{Characterization via Additivity, $\alpha$-Symmetry, and the $\alpha$-Dummifying Player Property}\label{sec:add}

We now present the second characterization, which relies on additivity, $\alpha$-symmetry, and the $\alpha$-dummifying player property.

\begin{theorem}\label{thm:add}
A value $\psi$ on $\mathcal{G}$ satisfies efficiency (\textbf{E}), additivity (\textbf{A}), $\alpha$-symmetry (\textbf{$\alpha$-S}), and the $\alpha$-dummifying player property (\textbf{$\alpha$-DP}) if and only if $\psi=CIS^\alpha$.
\end{theorem}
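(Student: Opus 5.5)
The plan is to verify that $CIS^\alpha$ satisfies the four axioms, and then to prove uniqueness by writing an arbitrary game as a finite sum of simple games. On each simple game the axioms pin down $\psi$, and additivity then gives $\psi$ on the whole game.

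\textbf{Existence.} Efficiency follows by summing \eqref{eq:weightedcis} over $i\in N$. Additivity holds because $v\mapsto v^\alpha$ is linear, so $CIS^\alpha$ is linear in $v$. For \textbf{$\alpha$-S}, let $i,j$ be symmetric. Taking $S=\emptyset$ in the definition of symmetry gives $v(\{i\})=v(\{j\})$. In any case $CIS^\alpha_i(N,v)-\alpha_i v(\{i\})=v^\alpha(N)/\abs{N}$ does not depend on $i$. For \textbf{$\alpha$-DP}, if $i$ is $\alpha$-dummifying then taking $S=N$ gives $v^\alpha(N)=0$, so $CIS^\alpha_i(N,v)=\alpha_i v(\{i\})$.

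\textbf{Uniqueness.} Let $\psi$ satisfy \textbf{E}, \textbf{A}, \textbf{$\alpha$-S} and \textbf{$\alpha$-DP}, and fix $(N,v)$ with $n=\abs{N}$. If $n=1$, efficiency alone gives $\psi=v(N)=CIS^\alpha$, so assume $n\ge2$. For $R\subseteq N$ let $\delta_R$ be the game with $\delta_R(R)=1$ and $\delta_R(S)=0$ for $S\ne R$. For $i\in N$ define the game $g_i$ by $g_i(\{i\})=v(\{i\})$, $g_i(S)=\alpha_i v(\{i\})$ if $i\in S$ and $\abs{S}\ge2$, and $g_i(S)=0$ if $i\notin S$. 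Put $d_S=v^\alpha(S)$ for $\abs{S}\ge2$. Then
\[
v=\sum_{i\in N}g_i+\sum_{S\subseteq N,\ \abs{S}\ge2}d_S\,\delta_S .
\]
To check this, evaluate both sides at a coalition: on a singleton both sides equal $v(\{i\})$, and on $S$ with $\abs{S}\ge2$ the right-hand side equals $\sum_{j\in S}\alpha_j v(\{j\})+v^\alpha(S)=v(S)$. By \textbf{A} it therefore suffices to compute $\psi$ on each summand.

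The three kinds of summand are handled as follows.

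\emph{The games $g_i$.} Let $k\ne i$. The singleton worths of $g_i$ are $g_i(\{i\})=v(\{i\})$ and $0$ otherwise. So for every $S\ni k$ we have $\sum_{j\in S}\alpha_j g_i(\{j\})=\alpha_i v(\{i\})\,[i\in S]=g_i(S)$. Hence $k$ is $\alpha$-dummifying and $\psi_k(N,g_i)=0$. Efficiency, using $n\ge2$, then gives $\psi_i(N,g_i)=\alpha_i v(\{i\})$.

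\emph{The games $d\,\delta_S$ with $2\le\abs{S}<n$.} Here all singleton worths are $0$. Any $k\notin S$ is $\alpha$-dummifying, since every coalition containing $k$ differs from $S$ and so has worth $0$; hence $\psi_k=0$. Any two players $i,j\in S$ are symmetric, because for $R\subseteq N\setminus\{i,j\}$ neither $R\cup\{i\}$ nor $R\cup\{j\}$ can equal $S$. By \textbf{$\alpha$-S} the players in $S$ receive a common payoff $x$. Efficiency gives $\abs{S}x=d\,\delta_S(N)=0$, so $\psi\equiv0$ on these games.

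\emph{The game $d\,\delta_N$.} All players are symmetric and all singleton worths are $0$, so by \textbf{$\alpha$-S} and \textbf{E} each player receives $d/n$.

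Summing the three cases gives $\psi_i(N,v)=\alpha_i v(\{i\})+v^\alpha(N)/n=CIS^\alpha_i(N,v)$.

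The main obstacle is choosing the decomposition. The Harsanyi/unanimity basis does not work here: a game $c_Tu_T$ with $c_T\neq0$ has no $\alpha$-dummifying players, since the grand coalition contains every player and has worth $c_T$. In addition, additivity alone does not yield homogeneity, which is why the argument scales each basis game by its actual coefficient rather than relying on $\psi(c\,w)=c\,\psi(w)$. The games $g_i$ are built so that the weighted individual worth $\alpha_i v(\{i\})$ is carried by every coalition containing $i$. This makes every other player $\alpha$-dummifying, which is exactly what \textbf{$\alpha$-DP} needs in order to apply.
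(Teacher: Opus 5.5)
Your proof is correct, but it takes a different route from the paper's.

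\emph{The paper's route.} The paper works in the Harsanyi unanimity basis. Since $c\,u_T$ itself has no $\alpha$-dummifying player, it applies \textbf{$\alpha$-DP} to the difference $c\,u_T-c\,u_{T\cup\{k\}}$, in which $k$ is $\alpha$-dummifying. This gives $\psi_k(c\,u_T)=\psi_k(c\,u_{T\cup\{k\}})$. It then runs a backward induction on $\abs{T}$, starting from $T=N$, to show $\psi\equiv c/n$ on $c\,u_T$ whenever $\abs{T}\ge 2$. The singleton games $c\,u_{\{i\}}$ are handled separately with the compensating game $c\,u_{\{i\}}+(\alpha_i-1)c\,u_{\{i,j\}}$, in which $j$ is $\alpha$-dummifying.

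\emph{Your route.} You decompose $v$ in an $\alpha$-adapted basis: the games $g_i$ together with the indicator games $\delta_S$, $\abs{S}\ge 2$. These are chosen so that every summand already has $\alpha$-dummifying players: all $k\neq i$ in $g_i$, and all $k\notin S$ in $\delta_S$. Each summand is then settled in one line by \textbf{$\alpha$-DP}, \textbf{$\alpha$-S} and \textbf{E}. You need no induction, no auxiliary difference games, and no tracking of how the common payoff on $c\,u_T$ varies with $T$.

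\emph{Trade-off.} Your argument is shorter and more transparent. The paper's keeps the standard, $\alpha$-independent unanimity basis and mirrors the classical proofs of van den Brink and Casajus--Huettner.

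\emph{Correction to your closing remark.} The claim that ``the Harsanyi/unanimity basis does not work here'' is too strong. It is true that \textbf{$\alpha$-DP} cannot be applied to $c_T u_T$ directly, except in the degenerate case $T=\{i\}$ with $\alpha_i=1$, where every player is $\alpha$-dummifying. But it can be applied to suitable differences of unanimity games, which is exactly what the paper does. Both proofs avoid homogeneity in the same way, by treating every scalar multiple $c\,u_T$ or $d\,\delta_S$ directly.
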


\begin{proof}
\textbf{Existence.} One can easily check that the $CIS^\alpha$ value satisfies the four axioms via direct substitution.

\textbf{Uniqueness.} Fix $N$ with $\abs{N}=n$. By \textbf{A}, it suffices to show $\psi=CIS^\alpha$ on scaled unanimity games $c u_T$ ($c\in\R$, $\emptyset\neq T\subseteq N$).

\emph{Case 1: $\abs{T}\ge 2$.} In $c u_T$, all singletons have value zero. Players in $T$ are symmetric, as are players outside $T$. By \textbf{$\alpha$-S} there exist $a_t,b_t$ with $\psi_i(c u_T)=a_t$ for $i\in T$, $\psi_k(c u_T)=b_t$ for $k\notin T$, and \textbf{E} gives $t a_t+(n-t)b_t=c$, where $t=\abs{T}$.

Pick $k\notin T$ and define $w=c u_T-c u_{T\cup\{k\}}$. In $w$, all singleton coalitions are zero and $w(S)=0$ for every $S\ni k$; thus $k$ is $\alpha$-dummifying. By \textbf{$\alpha$-DP}, $\psi_k(w)=0$. \textbf{A} yields $\psi_k(w)=\psi_k(c u_T)-\psi_k(c u_{T\cup\{k\}})$. In $c u_{T\cup\{k\}}$, player $k$ belongs to the coalition, so its payoff is the internal value for size $t+1$, denoted $a_{t+1}$. Hence $b_t=a_{t+1}$.

When $T=N$, \textbf{$\alpha$-S} forces equality $\psi_i(c u_N)=a_n$ for all $i\in N$, and  \textbf{E} leads to $a_n=c/n$. Applying $b_t=a_{t+1}$ backwards from $t=n-1$ down to $t=2$ and the  equations gives $a_t=b_t=c/n$ for all $t\ge 2$. Hence $\psi_i(c u_T)=c/n$ for all $i\in N$. Note that $CIS^\alpha_i(cu_T)=c/n$, so $\psi(cu_T)=CIS^\alpha(cu_T)$.

\emph{Case 2: $T=\{i\}$.} In $c u_{\{i\}}$, $c u_{\{i\}}(\{i\})=c$, $c u_{\{i\}}(\{j\})=0$ for $j\neq i$. All $j\neq i$ are symmetric, so by \textbf{$\alpha$-S}, $\psi_i(c u_{\{i\}})=x$, $\psi_j(c u_{\{i\}})=y$ for $j\neq i$, and \textbf{E} gives $x+(n-1)y=c$.

Fix $j\neq i$ and set $w=c u_{\{i\}}+(\alpha_i-1)c u_{\{i,j\}}$.
For singleton coalitions: $w(\{i\})=c$, $w(\{j\})=0$, and $w(\{k\})=0$ for all $k\in N\setminus\{i,j\}$. For any $S\ni j$, if $i\notin S$, then clearly $w(S)=\sum_{\ell\in S}\alpha_\ell w(\{\ell\})=0$; if $i\in S$,
\[
w(S)=c+(\alpha_i-1)c=\alpha_i c=\sum_{\ell\in S}\alpha_\ell w(\{\ell\}),
\]
because $w(\{\ell\})=0$ for $\ell\neq i$. Hence $j$ is $\alpha$-dummifying. By \textbf{$\alpha$-DP}, $\psi_j(w)=0$. \textbf{A}  and Case~1 give
\[
0=\psi_j(w)=y+(\alpha_i-1)\frac{c}{n},
\]
so $y=c(1-\alpha_i)/n$ and $x=\alpha_i c+c(1-\alpha_i)/n$. Note that $CIS^\alpha_i(cu_i)=\alpha_i c+c(1-\alpha_i)/n=x$ and $CIS^\alpha_j(cu_{\{i\}})=c(1-\alpha_i)/n=y$ for all $j\in N\setminus\{i\}$,
so $\psi(c u_{\{i\}})=CIS^\alpha(cu_{\{i\}})$ for all $i\in N$.

Now for arbitrary $(N,v)$, write $v=\sum_{T\subseteq N} c_T u_T$. {\bf A} gives $\psi(v)=\sum_T\psi(c_T u_T)$.
In the previous steps, we have shown that $\psi(c_T u_T) = CIS^\alpha(c_T u_T)$ for every term $c_T u_T$. Substituting this into the sum gives
\[
\psi(v) = \sum_{T} CIS^\alpha(c_T u_T).
\]
Because the weighted CIS value $CIS^\alpha$ itself satisfies additivity ({\bf A}), the right-hand side equals $CIS^\alpha\!\left(\sum_{T} c_T u_T\right)$, which is precisely $CIS^\alpha(v)$ by the Harsanyi decomposition. Hence $\psi(v) = CIS^\alpha(v)$ for all games $v$, completing the proof.
\end{proof}

The four axioms in Theorem~\ref{thm:add} are logically independent. To see this, we construct four values,  each satisfying exactly three of the axioms \textbf{E}, \textbf{A}, \textbf{$\alpha$-S}, and \textbf{$\alpha$-DP}.
\begin{enumerate}

\item For all $(N,v)$, define
\[
\psi^1_i(N,v)=CIS^\alpha_i(N,v)+v(\{i\})-\frac{1}{|N|}\sum_{k\in N}v(\{k\}).
\]
This value satisfies \textbf{E}, \textbf{A}, \textbf{$\alpha$-S}, but does \emph{not} satisfy \textbf{$\alpha$-DP}.

\item Fix two distinct players $1$ and $2$. For any game $(N,v)$ with $\{1,2\}\subseteq N$, define
\[
\psi^2_i(N,v)=CIS^\alpha_i(N,v)+\bigl(v^\alpha(\{1\})-v^\alpha(\{2\})\bigr)\times
\begin{cases}
+1 & \text{if } i=1,\\
-1 & \text{if } i=2,\\
0 & \text{otherwise},
\end{cases}
\]
and $\psi^2(N,v)=CIS^\alpha(N,v)$ otherwise.
This value satisfies \textbf{E}, \textbf{A}, \textbf{$\alpha$-DP}, but does \emph{not} satisfy \textbf{$\alpha$-S}.

\item For all $(N,v)$, define
\[
\psi^3_i(N,v)=
\begin{cases}
\alpha_i v(\{i\})+\dfrac{v(\{i\})}{\sum_{k\in N}v(\{k\})}\Bigl(v(N)-\sum_{j\in N}\alpha_j v(\{j\})\Bigr) & \text{if } \sum_{k}v(\{k\})>0,\\[8pt]
CIS^\alpha_i(N,v) & \text{if } \sum_{k}v(\{k\})=0.
\end{cases}
\]
This value satisfies \textbf{E}, \textbf{$\alpha$-S}, \textbf{$\alpha$-DP}, but does \emph{not} satisfy \textbf{A}.

\item For all $(N,v)$, define
\[
\psi^4_i(N,v)=\alpha_i v(\{i\}).
\]
This value satisfies \textbf{A}, \textbf{$\alpha$-S}, \textbf{$\alpha$-DP}, but does \emph{not} satisfy \textbf{E}.

\end{enumerate}

These four values show that no axiom is redundant in the characterization.
\subsection{Characterization by Efficiency and $\alpha$-Subgame Sum-Order Preservation}\label{sec:WSP}

This section provides an axiomatic characterization of the weighted CIS value using only efficiency and the $\alpha$-subgame sum-order preservation axiom. The key step is to show that these two axioms imply $\alpha$-symmetry.

\begin{proposition}\label{prop:sym}
Let $\psi$ satisfy efficiency (\textbf{E}) and $\alpha$-subgame sum-order preservation (\textbf{$\alpha$-SSOP}). Then, for any game $(N,v)$ with $|N|\ge 2$,
\[
\psi_{i}(N,v)-\alpha_{i}v(\{i\})=\psi_{j}(N,v)-\alpha_{j}v(\{j\}),\qquad\forall i,j\in N.
\]
In particular, $\psi$ satisfies $\alpha$-symmetry (\textbf{$\alpha$-S}).
\end{proposition}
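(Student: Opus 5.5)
The idea is to prove the identity for an arbitrary game $(N,v)$ with $\abs{N}=n\ge 2$ by embedding $(N,v)$ as a subgame of a larger game. In that larger game, \textbf{$\alpha$-SSOP} can be applied twice in both directions, once with a singleton intersection and once with an intersection made of fresh players. Since $U$ is infinite, we can choose a set $K\subset U\setminus N$ of $n-1$ fresh players. For fixed $i\neq j$ in $N$ put $T_i=\{i\}\cup K$ and $T_j=\{j\}\cup K$. Both have size $n$, and neither equals $N$. Because $n\ge2$, neither is a singleton, since each contains a member of $N$ and at least one fresh player. Define a game $w$ on $M=N\cup K$ as follows:
\[
w(S)=v(S)\ \ (S\subseteq N),\qquad w(\{k\})\ \text{arbitrary (say }0)\ \ (k\in K),\qquad w(T_\ell)=v^\alpha(N)+\sum_{m\in T_\ell}\alpha_m w(\{m\})\ \ (\ell=i,j),
\]
with all other coalitions given arbitrary worths. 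This is well defined because $T_i$, $T_j$ are distinct non-singleton coalitions that are not subsets of $N$. By construction $w|_N=v$ and $w^\alpha(N)=w^\alpha(T_i)=w^\alpha(T_j)=v^\alpha(N)$.

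\emph{Step 1.} Apply \textbf{$\alpha$-SSOP} to $S=N$ and $T=T_i$ in $(M,w)$. They have equal size, and $P=N\cap T_i=\{i\}$. Since $w^\alpha(N)=w^\alpha(T_i)$, the axiom applies in both directions and yields $\psi_i(N,v)=\psi_i(T_i,w|_{T_i})$. In the same way, $\psi_j(N,v)=\psi_j(T_j,w|_{T_j})$.

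\emph{Step 2.} Apply \textbf{$\alpha$-SSOP} to $T_i$ and $T_j$. They have equal size, and $P=K\neq\emptyset$. Their $\alpha$-residual worths are equal, so we get $\sum_{k\in K}\psi_k(T_i,w|_{T_i})=\sum_{k\in K}\psi_k(T_j,w|_{T_j})$. By \textbf{E} on the two subgames, this becomes $\psi_i(T_i,w|_{T_i})-w(T_i)=\psi_j(T_j,w|_{T_j})-w(T_j)$.

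\emph{Step 3.} From the definition of $w$,
\[
w(T_i)-w(T_j)=\alpha_i w(\{i\})-\alpha_j w(\{j\})=\alpha_i v(\{i\})-\alpha_j v(\{j\}).
\]
Combining this with Steps 1 and 2 gives $\psi_i(N,v)-\alpha_i v(\{i\})=\psi_j(N,v)-\alpha_j v(\{j\})$. This holds in particular when $i$ and $j$ are symmetric, so \textbf{$\alpha$-S} follows.

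The main obstacle is Step 2. A naive construction adds a single new player $k$ and compares $N$ with $(N\setminus\{i\})\cup\{k\}$. That only works for $n=2$, because for larger $n$ the intersections have several elements and the axiom controls only their sums. Taking $n-1$ fresh players solves this: the first comparison then has a singleton intersection, and the second has an intersection consisting only of fresh players, whose total payoff cancels through \textbf{E}. The remaining point to verify is that the worths prescribed for $w$ are consistent, which holds because $T_i$ and $T_j$ are new, distinct, non-singleton coalitions.
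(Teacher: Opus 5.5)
Your proof is correct and follows the paper's own construction: embed $(N,v)$ in a game on $N\cup K$ with $n-1$ fresh players, compare $N$ with $K\cup\{i\}$ (intersection $\{i\}$) and $K\cup\{i\}$ with $K\cup\{j\}$ (intersection $K$) via \textbf{$\alpha$-SSOP}, and cancel the $K$-sums using efficiency. Your version is slightly cleaner in two respects. It covers $n=2$ in the same argument, whereas the paper treats $n\ge 3$ here and defers $n=2$ to the proof of the theorem. It also does not reset $\alpha_k=0$ for the fresh players, which the paper does even though $\alpha$ is fixed on $U$; setting $w(\{k\})=0$, or keeping the explicit $\sum_{m}\alpha_m w(\{m\})$ term, already makes the $\alpha_k$ irrelevant.
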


\begin{proof}
Let $n=|N|\ge 3$.
Take a set $K$ of $n-1$ new players disjoint from $N$.
For each $i\in N$, set $S_{i}=K\cup\{i\}$; then $|S_{i}|=n$.
Construct an auxiliary game $w$ on $N\cup K$ as follows:
\begin{itemize}
  \item $w|_N = v$, i.e., $w(S)=v(S)$ for all $S\subseteq N$.
  \item For each $k\in K$, $\alpha_k=0$, $w(\{k\})=0$, and $w(R)=0$ for all $R\subseteq K$.
  \item For each $i\in N$, $w(S_{i}) = v^{\alpha}(N) + \alpha_{i}v(\{i\})$.
  \item For any $k\in K$ and $i\in N$, $w(S_{i}\setminus\{k\}) = w(S_{i})$.
  \item All other coalitions have worth $0$.
\end{itemize}
One checks that $w^{\alpha}(S_{i}) = v^{\alpha}(N)$ for all $i\in N$.

Apply \textbf{$\alpha$-SSOP} to  $(N, w|_N)$ and $(S_{i},w|_{S_{i}})$.
Note that $|S_i|=|N|=n$, $S_i\cap N=\{i\}$, and $v^{\alpha}(N)=w^{\alpha}(S_{i})$.
Hence,
\[
\psi_{i}(N,v)= \psi_{i}(N,w|_N) = \psi_{i}(S_{i},w|_{S_{i}}). \tag{1}
\]
Next, compare $S_{i}$ and $S_{j}$ for $i,j\in N$.
Their intersection is $K$, and $w^{\alpha}(S_{i})=w^{\alpha}(S_{j})$.
Thus, by \textbf{$\alpha$-SSOP}, we have
\[
\sum_{k\in K}\psi_{k}(S_{i},w|_{S_{i}}) = \sum_{k\in K}\psi_{k}(S_{j},w|_{S_{j}}). \tag{2}
\]

The {\bf E} axiom on $(S_{i},w|_{S_{i}})$ gives
\[
\psi_{i}(S_{i},w|_{S_{i}}) + \sum_{k\in K}\psi_{k}(S_{i},w|_{S_{i}}) = v^{\alpha}(N) + \alpha_{i}v(\{i\}),
\]
and similarly for $S_{j}$.
Subtracting the two equations and using (2) yields
\[
\psi_{i}(S_{i},w|_{S_{i}}) - \psi_{j}(S_{j},w|_{S_{j}}) = \alpha_{i}v(\{i\}) - \alpha_{j}v(\{j\}).
\]
By (1), this is exactly $\psi_{i}(N,v)-\psi_{j}(N,v) = \alpha_{i}v(\{i\})-\alpha_{j}v(\{j\})$.
If $i$ and $j$ are symmetric,
then $\psi_{i}-\alpha_{i}v(\{i\})=\psi_{j}-\alpha_{j}v(\{j\})$, i.e., \textbf{$\alpha$-S}.

For $n=2$, the same conclusion follows by a direct construction with one extra player; the details are analogous to the case $n\ge 3$ (see the proof of Theorem~\ref{thm:wsp} below).
\end{proof}

\begin{theorem}\label{thm:wsp}
A value $\psi$ on $\mathcal{G}$ satisfies efficiency (\textbf{E}) and $\alpha$-subgame sum-order preservation (\textbf{$\alpha$-SSOP}) if and only if $\psi=CIS^\alpha$.
\end{theorem}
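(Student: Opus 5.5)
The plan splits into existence and uniqueness; uniqueness leans on Proposition~\ref{prop:sym}.

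\textbf{Existence.} Efficiency of $CIS^\alpha$ is immediate. For \textbf{$\alpha$-SSOP}, take $S,T\subseteq N$ with $|S|=|T|=s$, $P=S\cap T\neq\emptyset$ and $v^\alpha(S)\ge v^\alpha(T)$. By definition,
\[
\sum_{i\in P}CIS^\alpha_i(S,v|_S)=\sum_{i\in P}\alpha_i v(\{i\})+\frac{|P|}{s}v^\alpha(S).
\]
The same expression holds for $T$ with $v^\alpha(T)$ in place of $v^\alpha(S)$. The first sum is common to both sides, so the required inequality reduces to $\frac{|P|}{s}v^\alpha(S)\ge\frac{|P|}{s}v^\alpha(T)$, which holds.

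\textbf{Uniqueness.} Let $\psi$ satisfy \textbf{E} and \textbf{$\alpha$-SSOP}. If $|N|=1$, say $N=\{i\}$, efficiency alone gives $\psi_i(N,v)=v(\{i\})$, which equals $CIS^\alpha_i(N,v)=\alpha_i v(\{i\})+v(\{i\})-\alpha_i v(\{i\})$. If $|N|\ge 2$, Proposition~\ref{prop:sym} gives a constant $c$ with $\psi_i(N,v)-\alpha_i v(\{i\})=c$ for all $i\in N$. Summing and applying \textbf{E} yields $nc=v^\alpha(N)$, so $\psi_i(N,v)=\alpha_i v(\{i\})+v^\alpha(N)/n=CIS^\alpha_i(N,v)$. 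Thus the theorem follows once Proposition~\ref{prop:sym} is fully established.

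\textbf{Main obstacle: the case $n=2$.} The proof of Proposition~\ref{prop:sym} defers this case to here. Let $N=\{i,j\}$ and add one new player $k$ with $\alpha_k=0$. Build $w$ on $\{i,j,k\}$ as follows:
\begin{itemize}
\item $w|_N=v$ and $w(\{k\})=0$;
\item $w(\{i,k\})=v^\alpha(N)+\alpha_i v(\{i\})$ and $w(\{j,k\})=v^\alpha(N)+\alpha_j v(\{j\})$;
\item all remaining worths are arbitrary (say $0$).
\end{itemize}
Then $w^\alpha(\{i,k\})=w^\alpha(\{j,k\})=w^\alpha(N)=v^\alpha(N)$. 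The argument runs in three comparisons.
\begin{enumerate}
\item Applying \textbf{$\alpha$-SSOP} in both directions to $N$ and $\{i,k\}$, whose intersection is $\{i\}$, gives $\psi_i(N,v)=\psi_i(\{i,k\},w)$. Likewise $\psi_j(N,v)=\psi_j(\{j,k\},w)$.
\item Comparing $\{i,k\}$ and $\{j,k\}$, whose intersection is $\{k\}$, gives $\psi_k(\{i,k\},w)=\psi_k(\{j,k\},w)$.
\item Efficiency in the two pairs $\{i,k\}$ and $\{j,k\}$, subtracted, yields $\psi_i(N,v)-\psi_j(N,v)=\alpha_i v(\{i\})-\alpha_j v(\{j\})$.
\end{enumerate}
This is exactly the conclusion of Proposition~\ref{prop:sym} for $n=2$.

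The one point to check is that the subgames used are exactly as constructed. Each two-player subgame's singleton worths come from $w|_N=v$ and $w(\{k\})=0$, which keeps the $v^\alpha$ computations consistent.
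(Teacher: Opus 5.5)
Your proof is correct and takes the same route as the paper. Existence is by direct computation. Uniqueness for $n\ge 3$ uses Proposition~\ref{prop:sym} plus efficiency, and the $n=2$ case uses the same one-extra-player construction, comparing $N$ with $\{i,k\}$ and then $\{i,k\}$ with $\{j,k\}$.

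One small point, which the paper shares: since $\alpha$ is fixed on $U$ you cannot actually choose $\alpha_k=0$. This is harmless, because $w(\{k\})=0$ makes $\alpha_k$ irrelevant in every $w^\alpha$ computation.
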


\begin{proof}
\textbf{Existence.} Direct substitution shows that $CIS^\alpha$ satisfies both axioms.

\textbf{Uniqueness.} Let $(N,v)$ be a game with $|N|=n\ge 2$ (the case $n=1$ is trivial by \textbf{E}).
If $n\ge 3$, Proposition~\ref{prop:sym} gives
\[
\psi_{i}(N,v) = \psi_{j}(N,v) + \alpha_{i}v(\{i\}) - \alpha_{j}v(\{j\})
\]
for an arbitrary  player $j\in N$.
Summing over all $j\in N$ and applying {\bf E},
\[
n\psi_{i}(N,v) = \sum_{j\in N}\psi_{j}(N,v) +n \alpha_{i}v(\{i\}) - \sum_{j\in N}\alpha_{j}v(\{j\})=v(N) +n \alpha_{i}v(\{i\}) - \sum_{j\in N}\alpha_{j}v(\{j\})
\]
which yields the weighted CIS formula for $\psi_{i}$, and therefore $\psi=CIS^\alpha$ for every player.

For $n=2$, let $N=\{i,j\}$.
Take a single new player $k$ and set $S_{i}=\{k,i\}$, $S_{j}=\{k,j\}$.
Construct $w$ on $\{i,j,k\}$ by: $w|_N=v$; $\alpha_k=0$, $w(\{k\})=0$; $w(S_{i})=v^{\alpha}(N)+\alpha_{i}v(\{i\})$, $w(S_{j})=v^{\alpha}(N)+\alpha_{j}v(\{j\})$; all other coalitions take zero.
Then $|S_{i}|=|S_{j}|=2$, $|N|=2$, and $w^{\alpha}(S_{i})=w^{\alpha}(S_{j})=v^{\alpha}(N)$.
Comparing $N$ with $S_{i}$ yields $\psi_{i}(N,v)=\psi_{i}(N,w|_N)=\psi_{i}(S_{i},w|_{S_{i}})$, and comparing $S_{i}$ with $S_{j}$ (intersection $\{k\}$) gives $\psi_{k}(S_{i},w|_{S_{i}})=\psi_{k}(S_{j},w|_{S_{j}})$.
{\bf E} on $S_{i}$ and $S_{j}$, together with these equalities, leads to the same difference formula and hence to $CIS^\alpha$.
Thus $\psi=CIS^\alpha$ for all $n$.
\end{proof}

As a byproduct of the two axioms, we obtain the $\alpha$-dummifying player property for games with at least three players. This unifies the nullifying and dummifying properties in the extreme cases.

\begin{proposition}\label{prop:fromWSP}
For all $(N,v)$ with $|N|\ge 3$, if a value $\psi$ satisfy efficiency (\textbf{E}) and $\alpha$-subgame sum-order preservation (\textbf{$\alpha$-SSOP}), then $\psi$ satisfies
$\alpha$-dummifying property (\textbf{$\alpha$-DP}).
\end{proposition}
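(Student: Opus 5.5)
The plan is to derive $\alpha$-DP from the explicit formula rather than from the axioms directly. For $|N|\ge 3$, Theorem~\ref{thm:wsp} (via Proposition~\ref{prop:sym}) already shows that any value satisfying \textbf{E} and \textbf{$\alpha$-SSOP} coincides with $CIS^\alpha$ on $(N,v)$. So it suffices to check that $CIS^\alpha$ itself satisfies the $\alpha$-dummifying player property.

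The computation is short. Let $i$ be an $\alpha$-dummifying player in $(N,v)$. By definition $v^\alpha(S)=0$ for every $S\ni i$, and in particular for $S=N$, so $v^\alpha(N)=0$. Hence
\[
\psi_i(N,v)=CIS^\alpha_i(N,v)=\alpha_i v(\{i\})+\frac{1}{\abs{N}}v^\alpha(N)=\alpha_i v(\{i\}),
\]
which is exactly \textbf{$\alpha$-DP}. Only the condition at $S=N$ is needed, and the condition at $S=\{i\}$, namely $v(\{i\})=\alpha_i v(\{i\})$, is never used.

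If one wants an argument that avoids quoting the full characterization and instead mirrors the intended ``byproduct'' logic, the route is as follows. Proposition~\ref{prop:sym} gives
\[
\psi_j(N,v)-\alpha_j v(\{j\})=\psi_i(N,v)-\alpha_i v(\{i\})
\]
for all $j\in N$. Summing over $j$ and applying \textbf{E} yields
\[
\abs{N}\bigl(\psi_i(N,v)-\alpha_i v(\{i\})\bigr)=v^\alpha(N)=0.
\]

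The only point needing care is the restriction $|N|\ge 3$. It is there because Proposition~\ref{prop:sym} was proved with the $(n-1)$-player extension for $n\ge 3$, while the case $n=2$ was only sketched. I therefore expect no real obstacle: the main step is just invoking Proposition~\ref{prop:sym}/Theorem~\ref{thm:wsp} within its proved range and observing that $v^\alpha(N)=0$.
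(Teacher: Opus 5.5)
Your proof is correct, but it takes a different route from the paper's. You treat the statement as a corollary of the characterization. Proposition~\ref{prop:sym} (equivalently Theorem~\ref{thm:wsp}) already forces $\psi(N,v)=CIS^\alpha(N,v)$, and an $\alpha$-dummifying player forces $v^\alpha(N)=0$, so every player $j$ receives $\alpha_j v(\{j\})$. There is no circularity, since Theorem~\ref{thm:wsp} does not rely on this proposition.

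The paper instead argues directly from the axioms:
\begin{enumerate}
\item It embeds $(N,v)$ in a game on $N\cup K$ and uses \textbf{$\alpha$-SSOP} to transfer $\psi_i(N,v)$ to the game $u=w|_T$ on $T=K\cup\{i\}$, where $u^\alpha\equiv 0$.
\item It transfers each $\psi_k(T,u)$, $k\in K$, to a zero game on $U=M\cup\{k\}$ and argues that these payoffs vanish.
\item It concludes by \textbf{E} on $T$ that $\psi_i(T,u)=u(T)=\alpha_i v(\{i\})$.
\end{enumerate}
That route is meant to show \textbf{$\alpha$-DP} emerging from the order-preservation axiom without the full characterization.

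Your route is shorter and also more robust. In the paper's zero-game step, comparing $U_{-p}$ with $U_{-q}$ and applying \textbf{E} to these subgames only yields $\psi_q(U_{-p},u'|_{U_{-p}})=\psi_p(U_{-q},u'|_{U_{-q}})$. That is a relation between $(n-1)$-player games, not $\psi_p(U,u'|_U)=\psi_q(U,u'|_U)$. Closing that gap needs exactly the extension trick of Proposition~\ref{prop:sym}, which you invoke directly.

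One correction to your last paragraph: your argument does not need $|N|\ge 3$. The proof of Theorem~\ref{thm:wsp} treats $n=2$ explicitly, not just as a sketch. The case $n=1$ is immediate from \textbf{E}, because an $\alpha$-dummifying $i$ satisfies $v(\{i\})=\alpha_i v(\{i\})$. So you in fact obtain \textbf{$\alpha$-DP} for every $|N|$. In the paper, the restriction comes only from its zero-game step, which needs $U\setminus\{p,q\}\neq\emptyset$.
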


\begin{proof}
Let $n=|N|\ge 3$ and $i\in N$ be $\alpha$-dummifying. Then $v^\alpha(S)=0$ for all $S\ni i$, so $v^\alpha(N)=0$ and $v(\{i\})=\alpha_i v(\{i\})$.

Choose a set $K$ of $n-1$ new players disjoint from $N$ and let $T=K\cup\{i\}$ ($|T|=n$). Define a game $w$ on $N\cup K$ by:
\begin{itemize}
\item $w|_N=v$, i.e., $w(S)=v(S)$ for all $S\subseteq N$;
\item For $k\in K$, set $\alpha_k=0$, and $w(R)=0$ for all $R\subseteq K$;
\item $w(\{i\})=v(\{i\})$;
\item For any $\emptyset\neq R\subseteq K$, $w(R\cup\{i\})=\alpha_i v(\{i\})$;
\item All remaining coalitions have value $0$.
\end{itemize}
One verifies $w^\alpha(S)=0$ for all $S\subseteq T$. Both $N$ and $T$ have size $n$, intersection $\{i\}$, and $v^\alpha(N)=w^\alpha(T)=0$. \textbf{$\alpha$-SSOP} therefore gives
\[
\psi_i(N,v)=\psi_i(N,w|_N)=\psi_i(T,w|_T). \tag{3}
\]

Now set $u=w|_T$ on $T=K\cup\{i\}$. Note that $u(\{i\})=v(\{i\})$, $u(\{k\})=0$ for $k\in K$, and $u(R\cup\{i\})=\alpha_i v(\{i\})$ for $R\subseteq K$, so $u^\alpha\equiv 0$ on $T$.

Fix $k\in K$. Take a disjoint set $M$ of $n-1$ new players and let $U=M\cup\{k\}$ ($|U|=n$). Define $u'$ on $T\cup M$ by:
\begin{itemize}
\item $u'|_T=u$, i.e., $u'(S)=u(S)$ for all $S\subseteq T$;
\item For $m\in M$, $\alpha_m=0$, $u'(\{m\})=0$, and all coalitions within $U$ have value $0$;
\item All other coalitions take zero.
\end{itemize}
We have $|T|=|U|=n$, $T\cap U=\{k\}$, and $u^\alpha(T)=(u')^\alpha(U)=0$. \textbf{$\alpha$-SSOP} yields
\[
\psi_k(T,u)=\psi_k(T,u'|_T)=\psi_k(U,u'|_U). \tag{4}
\]

The game $(U,u'|_U)$ is a zero game on $n\ge 3$ players with all $\alpha_m=0$. For distinct $p,q\in U$, consider $U_{-p}=U\setminus\{p\}$ and $U_{-q}=U\setminus\{q\}$; they have size $n-1$ and intersection $P=U\setminus\{p,q\}\neq\emptyset$ (since $n\ge 3$). Both have zero worth, so \textbf{$\alpha$-SSOP} forces $\sum_{j\in P}\psi_j(U_{-p},u'|_{U_{-p}})=\sum_{j\in P}\psi_j(U_{-q},u'|_{U_{-q}})$. Applying \textbf{E} to the subgames, then implies $\psi_p(U,u'|_U)=\psi_q(U,u'|_U)$. Hence all players in $U$ receive the same payoff, which must be zero by \textbf{E}. Consequently, $\psi_k(U,u'|_U)=0$, and from (4) we obtain $\psi_k(T,u)=0$.

Since $k\in K$ was arbitrary, $\sum_{k\in K}\psi_k(T,u)=0$. {\bf E} on $T$ gives $\psi_i(T,u)=u(T)=\alpha_i v(\{i\})$. Together with (3) we conclude $\psi_i(N,v)=\alpha_i v(\{i\})$.
\end{proof}

\begin{remark}
When all $\alpha_j=0$, Proposition~\ref{prop:fromWSP} states that any efficient value satisfying the corresponding subgame order preservation axiom assigns zero to nullifying players; when all $\alpha_j=1$, it assigns individual worths to dummifying players. Thus the proposition provides a unified derivation of these two classical player properties from a single order-preservation condition.
\end{remark}

\section{Concluding remarks}\label{sec:concl}

We have provided three  axiomatic characterizations of the weighted CIS value.  The first uses only generalized individual rationality and the equal surplus increment property. The second relies on efficiency, additivity, $\alpha$-symmetry, and the $\alpha$-dummifying player property. The third employs only efficiency and the $\alpha$-subgame sum-order preservation axiom. Each characterization captures the fairness and distributive rationale of the value from a distinct perspective. We also showed that efficiency together with the $\alpha$-subgame sum-order preservation axiom implies the $\alpha$-dummifying player property, which for the extreme cases all $\alpha_i=0$ and all $\alpha_i=1$ yields the nullifying and dummifying player properties, respectively. All results simultaneously cover the equal division value ($\alpha_i=0$), the standard CIS value ($\alpha_i=1$), and the $\alpha$-CIS value ($\alpha_i=\alpha$ constant).
As noted in the introduction, these characterizations unify and generalize several known results from the literature: van den Brink (2007) (\cite{vanBrink2007}), Casajus and Huettner (2014) (\cite{CasajusHuettner2014}), Navarro (2025) (\cite{Navarro2025}), and Lowing, Nakada, and Navarro (2025) (\cite{Lo2026}).


Future research may further weaken the order-preservation axioms introduced by by Lowing, Nakada, and Navarro \cite{Lo2026}. In particular, one could explore weaker forms of their ``subgame order preservation w.r.t. coalitional dual surplus ({\bf SOPCds}) and ``subgame order preservation with respect to individual contribution" ({\bf SOPIc}), with the goal of characterizing the equal allocation of non-separable contributions  (ENSC) value and the Shapley value, respectively under weaker conditions. Such investigations would further clarify the structural relations between egalitarian and marginalist solution concepts.

\section*{\bf Declaration of competing interest:} The authors declare that there is no conflict of interest.

\section*{Data availability}
No data was used for the research described in the article.

\bigskip
\noindent{\bf Acknowledgements.} 
This research is partly supported by the National Natural Science Foundation of China (72371151).

\end{document}